\documentclass[a4paper,USenglish,cleveref]{lipics-v2021}
\bibliographystyle{plainurl}

\nolinenumbers
\usepackage{xspace}
\usepackage{dsfont}
\usepackage{csquotes}
\usepackage{mathtools}
\usepackage{tabto}
\usepackage[ruled,vlined,linesnumbered]{algorithm2e}
\DontPrintSemicolon
\usepackage{contour}
\usepackage{refcount}

\usepackage{tikz}
\usetikzlibrary{backgrounds}
\usetikzlibrary{calc}
\usetikzlibrary{math}
\usetikzlibrary{positioning}
\usetikzlibrary{arrows.meta}
\usetikzlibrary{decorations.pathreplacing}
\usetikzlibrary{decorations.pathmorphing}
\usetikzlibrary{shapes.misc}
\usetikzlibrary{patterns}
\definecolor{superred}{rgb}{0.87,0.1,0.1}
\definecolor{superblue}{RGB}{66,102,170}
\definecolor{indigo}{HTML}{2D2F92}
\definecolor{mygreen}{rgb}{0, 0.75, 0.3}
\newcommand{\yelloworange}{yellow!50!orange}
\newcommand{\lightgray}{gray!15}
\newcommand{\darkgray}{gray!35}


\renewcommand{\cal}[1]{\ensuremath{\mathcal{#1}}\xspace}
\newcommand{\bb}[1]{\ensuremath{\mathbb{#1}}\xspace}
\definecolor{superred}{rgb}{0.87,0.1,0.1}

\newcommand{\1}[1]{\operatorname{\mathds{1}}\!\big( #1 \big)}
\newcommand{\argmax}{\operatorname{argmax}}
\newcommand{\range}[1]{\ensuremath{[#1]}}

\newcommand{\restrict}[1]{|_{#1}}

\newcommand{\apx}{\textsf{\small apx}}
\newcommand{\opt}{\textsf{\small opt}}
\newcommand{\A}{\cal A}
\newcommand{\B}{\cal B}
\newcommand{\C}{\cal C}
\newcommand{\ff}{f^\times}
\newcommand{\G}{\cal G}
\newcommand{\GG}{\bb G}
\newcommand{\M}{\cal M}
\newcommand{\Mt}{\M\restrict t}
\renewcommand{\P}{\cal P}
\newcommand{\qual}{q}
\newcommand{\qualcap}{\qual_\cap}
\newcommand{\Qual}{Q}
\renewcommand{\S}{\cal S}
\renewcommand{\st}{$s$-$t$-}

\newcommand{\siti}{$s_i$-$t_i$-}

\newcommand{\wmod}[1][Y]{w_{\varepsilon,#1}}

\newcommand{\problem}[1]{\textup{\textsf{#1}}\xspace}
\newcommand{\MMinCE}{\problem{MMinCut}}

\newcommand{\MMaxBIS}{\problem{MMaxBIS}}
\newcommand{\MMinBVC}{\problem{MMinBVC}}

\newcommand{\MMaxWBC}{\problem{MWBMaxCut}}
\newcommand{\MMinWPM}{\problem{MMinPM}}
\newcommand{\MSPath}{\problem{MSPath}}

\let\oldchi\chi
\renewcommand{\chi}{\text{$\oldchi$}}
\let\oldsum\sum
\renewcommand{\sum}{{\textstyle\oldsum\nolimits}}
\let\oldbigcup\bigcup
\renewcommand{\bigcup}{{\textstyle\oldbigcup\nolimits}}
\let\oldbigcap\bigcap
\renewcommand{\bigcap}{{\textstyle\oldbigcap\nolimits}}
\renewcommand{\frac}[2]{{\tfrac{#1}{#2}}}
\let\oldexists\exists
\renewcommand{\exists}{\oldexists\,}
\let\oldnexists\nexists
\renewcommand{\nexists}{\oldnexists\,}
\let\oldforall\forall
\renewcommand{\forall}{\oldforall\,}

\title{A General Approximation for Multistage Subgraph Problems}
\titlerunning{A General Approximation for Multistage Subgraph Problems}
\author{Markus Chimani}{Theoretical Computer Science, Osnabrück University, Germany}{markus.chimani@uos.de}{https://orcid.org/0000-0002-4681-5550}{}
\author{Niklas Troost}{Theoretical Computer Science, Osnabrück University, Germany}{niklas.troost@uos.de}{https://orcid.org/0000-0001-7412-2770}{}
\author{Tilo Wiedera}{Theoretical Computer Science, Osnabrück University, Germany}{tilo.wiedera@uos.de}{https://orcid.org/0000-0002-5923-4114}{}
\authorrunning{M. Chimani, N. Troost and T. Wiedera}
\Copyright{M. Chimani, N. Troost and T. Wiedera}
\ccsdesc{Theory of computation~Approximation algorithms analysis}
\keywords{Multistage Graphs, Approximation Algorithms, Approximation Framework, Subgraph Optimization}
\hideLIPIcs
\begin{document}
	\maketitle

	\begin{abstract}
		\emph{Subgraph Problems} are optimization problems on graphs where a solution is a subgraph that satisfies some property and optimizes some measure.
		Examples include shortest path, minimum cut, maximum matching, or vertex cover.
		In reality, however, one often deals with time-dependent data, i.e., the input graph may change over time and we need to adapt our solution accordingly.
		We are interested in guaranteeing optimal solutions after each graph change while retaining as much of the previous solution as possible.
		Even if the subgraph problem itself is polynomial-time computable, this \emph{multistage} variant turns out to be NP-hard in most cases.

		We present an algorithmic framework that---for any subgraph problem of a certain type---guarantees an optimal solution for each point in time and provides an approximation guarantee for the similarity between subsequent solutions.

		We show that the class of applicable multistage subgraph problems is very rich and that proving membership to this class is mostly straightforward.
		As examples, we explicitly state these proofs and obtain corresponding approximation algorithms for the natural multistage versions of Shortest \st Path, Perfect Matching, Minimum \st Cut---and further classical problems on bipartite or planar graphs, namely Maximum Cut, Vertex Cover, and Independent Set.
		We also report that all these problems are already NP-hard on only two stages.
	\end{abstract}


\section{Introduction \& Related Work}

Subgraph Problems (SPs) are concerned with selecting some feasible set of graph elements (vertices and/or edges) that is optimal w.r.t.\ some measure.
The class of SPs is very rich:
it includes many traditional planning problems like
Shortest \st Path,
Minimum \st Cut,
Minimum Weight Perfect Matching,
Maximum Independent Set,
Minimum Vertex Cover,
Maximum Cut,
Maximum Planar Subgraphs,
Steiner Trees,
etc.

As problem instances may be subject to change over time, it is often required to solve the same SP multiple times. A \emph{multistage graph} is simply a sequence of graphs (the \emph{stages}) and we ask for an individual solution per stage.
A natural concern is to avoid big changes when transitioning from one solution to the next,
since each change might introduce costs for changing the state of the corresponding entities.
Depending on the problem, the transition quality may be measured differently.

Introduced by Gupta et al.~\cite{GTW14} and Eisenstat et al.~\cite{EMS14},
multistage (a.k.a.\ temporal) graph problems have shown to be a rich subject of research.
In many cases, generalizing a polynomial-time solvable problem to a multistage setting renders it NP-hard, e.g., Multistage Shortest \st Path~\cite{FNSZ20} or Multistage Perfect Matching~\cite{GTW14}.
There is some work on identifying parameters that allow for fixed-parameter tractability of NP-hard multistage problems~\cite{FNSZ20,BFK22,FNRZ22,Flu21,HHK+21}.
Another popular approach to tackle such problems are approximation algorithms~\cite{EMS14,BEK21,BELP18,BEST21,BET22}, see~below.

In most of these works,
the objective is to optimize a combined quantity $O + T$ that measures both
the objective value~$O$ of the individual per-stage solutions and
the quality~$T$ of the transitions between subsequent solutions.
For example, consider a multistage variant of the Maximum Matching problem: Given a sequence of~$\tau$ graphs~$(G_i)_{i=1}^\tau$, find a matching~$M_i$ for each~$G_i$ such that~$O+T$ is maximized.
Here, $O\coloneqq \sum_{i=1}^\tau |M_i|$ is the sum of the individual matchings' cardinalities and~$T\coloneqq \sum_{i=1}^{\tau-1} |M_i\cap M_{i+1}|$ is the sum of transition qualities measured as the cardinality of edges that are common between subsequent solutions.

In an approximation setting, this combined objective allows to trade suboptimal transitions for suboptimal solutions in some stages.
This is exploited, e.g., in a $2$-approximation for a multistage Vertex Cover problem~\cite{BEK21}
and a $3$-approximation for a~$3$-stage Minimum Weight Perfect Matching problem on metric graphs~\cite{BELP18}.
In~\cite{BEST21}, several upper and lower bounds for competitive ratios of online algorithms are shown for general Multistage Subset Maximization problems;
their algorithms are not considering running times and depend on polynomial oracles for the underlying single-stage problems.

Contrasting this combined objective, the focus in~\cite{CTW22} is to break up the interdependency between~$O$ and~$T$ for several types of Multistage Perfect Matching problems.
Here, $O$ is required to yield optimal values, i.e., an approximation algorithm must yield optimal solutions for each individual stage.
Thus, the approximation factor truly measures the difficulty in approximating the transition cost.
Regarding exact algorithms, this would only be a special case of the combined objective, where transition costs are scaled appropriately.
However, approximation guarantees are in general \emph{not} transferable to this special case as the approximation may require non-optimal solutions in individual stages, see~\cite{CTW22} for a detailed example.
They also provide several approximation algorithms for multistage perfect matching problems, where the approximation factor is dependent on the maximum size of the intersection between two adjacent stages (we will later define a similar parameter as intertwinement).


\subparagraph{Key contribution.}
In this paper, we provide a framework to obtain approximation algorithms for a wide range of multistage subgraph problems, where, following the concept of~\cite{CTW22}, we guarantee optimal solutions in each stage (cf.\ \cref{sec:framework}).
As a key ingredient we define \emph{preficient} (short for \emph{preference efficient}) problems (Definition~\ref{def:preficient}); they allow to efficiently compute an optimal solution to an individual stage that prefers some given graph elements. As it turns out, many polynomial-time solvable graph problem are in fact trivially preficient. Secondly, we introduce the multistage graph parameter \emph{intertwinement} that provides a measure for the maximum similarity between subsequent stages of the multistage input graph.
Our framework algorithm can be applied to any preficient multistage subgraph problem where we measure the
transition quality as the number of common graph elements between subsequent stages; it yields an approximation ratio only dependent on the input's intertwinement, see Theorem~\ref{thm:2-stage-apx}.

A building block of this algorithm, which itself does not depend on the transition quality measure and may therefore be of independent interest, is Theorem~\ref{thm:tim-to-mim}:
Any $\alpha$-approximation (including $\alpha=1$) for a $t$-stage Subgraph Problem with fixed $t\geq 2$ can be lifted to an approximation for the corresponding unrestricted multistage subgraph problem.

Finally, in \cref{sec:applications}, we demonstrate that the class of applicable multistage problems is very rich: It is typically straightforward to construct a preficiency algorithm from classic algorithms. We can thus deduce several new approximation results simply by applying our preficiency framework approximation, without the need of additional deep proofs.
As examples, we showcase this for multistage variants of Shortest \st Path, Perfect Matching, and Minimum \st Cut. Furthermore, several NP-hard (single-stage) problems become polynomial-time solvable on restricted graph classes (e.g., planar, bipartite, etc.); on these, we can apply our framework as well, as we showcase for Maximum Cut, Vertex Cover, and Independent Set.


\section{Framework}\label{sec:framework}
All known successful applications of our framework are subgraph problems on graphs.
Thus, and for ease of exposition, we will describe our framework solely in this context.
However, we will never use any graph-intrinsic properties other than the fact that it is a system of elements.
It should be understood that we can in fact replace graphs with any other combinatorial structure (e.g., hypergraphs, matroids, fields, etc.) in all definitions and results, as long as a solution is a subset of elements of said structure.

For a graph~$G$, we refer to its vertices and edges collectively as \emph{elements}~$X(G)$.
An \emph{enriched graph} is a graph with additional information (e.g., weights or labels) at its elements.
The following definitions may at first seem overly complicated, but are carefully constructed to be as general as possible, similar to those for general NP optimization problems, e.g.\ in~\cite{KV08, FG06}.

\begin{definition}
	A \emph{Subgraph Problem (SP)} is a combinatorial optimization problem $\P\coloneqq(\GG, f, m, \psi)$, where
	\begin{itemize}
		\item $\GG$ denotes a class of enriched graphs that is the (in general infinite) \emph{set of possible instances};
		\item $f$ is a function such that, for an instance~$G\in\GG$, the set $f(G)\subseteq 2^{X(G)}$ contains the \emph{feasible solutions}; a \emph{feasible solution} $S\in f(G)$ is a subset of~$X(G)$;
		\item $m$ is a function such that, for an instance~$G\in\GG$ and a feasible solution~$S\in f(G)$, the \emph{measure} of~$S$ is given by~$m(G,S)$;
		\item the \emph{goal} $\psi$ is either $\min$ or $\max$.
	\end{itemize}
	Given some instance $G\in \GG$, the objective is to find a feasible solution $S\in f(G)$ that is \emph{optimal} in the sense that $m(G,S) = \psi\{m(G,S') \mid S'\in f(G)\}$.
	The \emph{set of optimal solutions} is denoted by~$f^*(G)$.
	An element $x\in X(G)$ is \emph{allowed w.r.t.~$\P$} if~$x\in X_\P(G)\coloneqq\bigcup_{S\in f^*(G)}S$.
\end{definition}

For $n,n'\in\mathbb N$, let $\range{n,n'}\coloneqq\{n,n+1,...,n'\}$ and $\range{n} \coloneqq\range{1,n}$.
A \emph{multistage graph} (or \emph{$\tau$-stage graph}), is a sequence of graphs $\G=(G_i)_{i\in\range\tau}$ for some $\tau\in\mathbb N_{>0}$.
The graph $G_i \coloneqq (V_i,E_i)$ is the $i$-\emph{th stage of}~$\G$.

\begin{definition}
	A \emph{Multistage Subgraph Problem (MSP)} is a combinatorial optimization problem $\M = (\P, \qual)$, where
	\begin{itemize}
		\item $\P\coloneqq(\GG,f,m,\psi)$ is a Subgraph Problem;
		\item an instance is a multistage graph $\G = (G_i)_{i\in\range\tau}\in\GG^\tau$ for some~$\tau\in\mathbb N_{>0}$; and
		\item $\qual$ is a non-negative function such that, given an instance $\G\in\GG^\tau$ and subsets~$Y_i\subseteq X(G_i)$ and~$Y_{i+1}\subseteq X(G_{i+1})$, $\qual(Y_i,Y_{i+1})$ measures the \emph{transition quality} of these sets for any $i\in\range{\tau-1}$.
	\end{itemize}
	Given some instance $\G\in\GG^\tau$, let $\ff(\G)\coloneqq f^*(G_1) \times ...\times f^*(G_\tau)$ denote the set of \emph{feasible multistage solutions}, containing $\tau$-tuples of optimal solutions for the individual stages.
	The objective is to find a feasible multistage solution $\S\in \ff(\G)$ that is maximum w.r.t.\ $\qual$ in the sense that $\Qual(\S) = \max\{\Qual(\S') \mid \S'\in \ff(\G)\}$
	where $\Qual(\S')\coloneqq \sum_{i\in\range{\tau-1}}\qual(S'_i,S'_{i+1})$ is the \emph{global quality} of~$\S$.
\end{definition}

We stress that a feasible multistage solution must necessarily consist of optimal solutions w.r.t.~$\P$ in each stage.
If there is an upper bound~$t$ on the number of stages $\tau$, an MSP~$\M$ may be denoted by~$\Mt$.
MSPs with some fixed function $\qual$ may be summarily referred to as $\qual$-MSPs.
In our definition, we aim at maximizing the transition quality.
Common choices for transition qualities are the \emph{intersection profit} $\qualcap(S_i,S_{i+1}) \coloneqq |S_i\cap S_{i+1}|$, e.g., in~\cite{FNSZ20,BELP18,BEST21,CTW22},
or measures based on the (symmetric) difference of subsequent stages~\cite{BEK21, BET22, KRZ21}.
Some multistage problems also consider minimizing transition \emph{costs} (see, e.g., \cite{FNSZ20, FNRZ22, Flu21}).


Considering some SP~$\P$ and a multistage graph~$\G=(G_i)_{i\in\range\tau}$,
we can measure the similarity of consecutive stages of~$\G$ w.r.t.~$\P$ by the \emph{intertwinement}
$\text{$\chi$}_\P(\G)\coloneqq \max_{i\in\range{\tau -1}} |X_\P(G_i)\cap X_\P(G_{i+1})|$.
If the context is clear, we may simply use~$\chi\coloneqq\chi_\P(\G)$.

We will establish approximation algorithms whose approximation quality decreases monotonously with increasing~$\chi$.
Consider any MSP~$\M$ with polynomial-time solvable SP~$\P$ and a $2$-stage input graph, where the two stages have nothing in common.
Then, optimizing~$\M$ is typically a simple matter of solving~$\P$ on each stage individually, yielding an exact polynomial-time algorithm.
Observe that the intertwinement captures this as $\chi=0$.
Increasing the commonality between the stages increases both the intertwinement and the multistage problem's complexity, suggesting that intertwinement is a feasible measure.
At some tipping point, once stages become very similar, our use of intertwinement loses its expressiveness and other similarity measures should be preferred.

\section{Algorithmic techniques for approximation}\label{sec:approx}

Given some MSP instance, we denote with $\opt$ its optimal solution value and with $\apx$ the objective value achieved by a given approximation algorithm.
The \emph{approximation ratio} of an approximation algorithm for a maximization problem is the infimum of $\apx / \opt$ over all instances.

\subsection{Reducing the number of stages}\label{subsec:approx-sandwich}

It is natural to expect that an MSP~$\M$ would be easier to solve if the number of stages is bounded by some constant~$t\geq 2$.
We can show that if we have an $\alpha$-approximate (or even an exact) algorithm~$\A$ for~$\Mt$, one can use it to craft a solution for the unbounded problem that is within factor $\alpha\cdot(t-1)/t$ of the optimum.

\newcommand{\Git}[2]{\G\restrict{#2}^{#1}}
\begin{algorithm}[tb]
\caption{
	Approximation of~$\M = (\P, \qual)$ given an $\alpha$-approximation~$\A$ for~$\Mt$\label{algo:tm-to-mim}
}
\KwIn{Enriched multistage graph $\G=(G_1,...,G_\tau)$, $\alpha$-approximation~$\A$ for~$\Mt$}
\KwOut{Multistage solution $\S$}
$\S=(\varnothing,..., \varnothing)$\;
\ForEach{$k\in\range{t}$}{
	$\S_k \gets \A(\Git{k}{1})$\;
	$i\gets k+1$\;
	\While{$i\leq\tau$}{
		$\S_k\gets \S_k\circ \A(\Git{t}{i})$\;
		$i\gets i+t$
	}
	\lIf{$\Qual(\S_k) \geq \Qual(\S)$}{$\S\gets\S_k$}
}
\end{algorithm}

Informally, $\Git{r}{i}$ denotes the subinstance of~$\G$ with~$r$ stages, starting at the $i$-th.
Formally, for~$i\in\range{\tau}$, let $\Git{r}{i}$ consist exactly of the stages with index in the range $\range{i, \min\{i+r-1,\tau\}}$.
Let~$\M$ be an MSP, $t\geq 2$, and assume we have an $\alpha$-approximation algorithm~$\A$ for~$\Mt$.
Algorithm~\ref{algo:tm-to-mim} constructs a set $\{\S_k\mid k\in\range t\}$ of candidate solutions and returns one with maximum global quality.
Each candidate solution~$\S_k$ is built as follows:
Start with a partial solution obtained from calling~$\A$ on the first $k$ stages of~$\G$;
then iteratively consider the subsequent $t$ stages as a subinstance, compute a partial solution using~$\A$ and append it to the existing partial solution (denoted by operator~$\circ$);
repeat this step until eventually stage~$\tau$ has been considered (in general, the final subinstance containing stage~$\tau$ may again consist of less than~$t$ stages).

\begin{theorem}\label{thm:tim-to-mim}
Let $\M = (\P, \qual)$ be an MSP and~$\A$ an $\alpha$-approximation for $\Mt$ for some fixed~$t\geq 2$ (possibly with $\alpha=1$).
Then Algorithm~\ref{algo:tm-to-mim} yields a $\beta$-approximation for~$\M$, where $\beta \coloneqq \alpha (t-1)/t$.
\end{theorem}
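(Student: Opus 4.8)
The plan is to exploit that, for each residue class $k\in\Range{t-1}$, the index windows $\{I_i\mid i\in J_k\}$ tile $\range\tau$ into consecutive disjoint blocks; this makes every $\S^{(k)}$ a genuine feasible multistage solution of $\M$, and then a counting argument over the $t$ residue classes followed by averaging yields the bound. Throughout I fix an optimal $\S^*\in\ff(\G)$ with $\opt\coloneqq\Qual(\S^*)$ and argue the case $\Psi=\max$; the case $\Psi=\min$ proceeds symmetrically, with all inequalities reversed.

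First I would settle feasibility of the $\S^{(k)}$. For $i<i'$ in the same $J_k$ the windows $I_i,I_{i'}$ are disjoint since their start indices differ by at least~$t$, and together they cover all of $\range\tau$: among the $t$ consecutive window starts $\{2-t,\dots,1\}$ that still reach index~$1$ exactly one has residue~$k$, the following windows of residue~$k$ are contiguous, and the last one is clipped at~$\tau$. Concatenating $(\T^{(i)})_{i\in J_k}$ hence yields a $\tau$-tuple; as each $\T^{(i)}=\A(\G\restrict{I_i})$ consists of per-stage optimal solutions (any approximation of an MSP must, by the feasibility requirement), so does $\S^{(k)}$, i.e.\ $\S^{(k)}\in\ff(\G)$, and the returned $\S$ is feasible. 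Degenerate windows of size~$1$ occur only at the extreme starts $i=-t+2$ and $i=\tau$; they add nothing to $\Qual$ and require merely an optimal single-stage solution, which is subsumed by $\A$.

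Next I would prove the per-window guarantee: for every $i\in J$ the restriction $\S^*\restrict{I_i}$ is a feasible solution of $\M\restrict t$ on $\G\restrict{I_i}$, hence cannot beat the window optimum, so the approximation factor of $\A$ (and $\alpha>0$) yields $\Qual(\T^{(i)})\ge\alpha\,\Qual(\S^*\restrict{I_i})$, where $\Qual(\S^*\restrict{I_i})=\sum_{j\colon\{j,j+1\}\subseteq I_i}\qual(S^*_j,S^*_{j+1})$ is precisely the $\S^*$-value of the transitions internal to the block $I_i$. Two bookkeeping facts then finish the proof. (i)~A transition index $j\in\range{\tau-1}$ is internal to $I_i$ exactly when $i\in\{j-t+2,\dots,j\}$, hence to precisely $t-1$ of all windows $\{I_i\mid i\in J\}$. (ii)~Within a fixed $\S^{(k)}$ each transition $j$ is either internal to one block of that class (its value being the corresponding term of $\Qual(\T^{(i)})$ for that $i\in J_k$) or a boundary transition between two consecutive blocks, the latter occurring for exactly one residue class, namely $k\equiv j+1\pmod t$. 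Since for $\Psi=\max$ boundary transitions contribute non-negatively and may be dropped, summing over the $t$ residue classes gives
\[
	\sum_{k\in\Range{t-1}}\Qual(\S^{(k)})\ \ge\ \sum_{k\in\Range{t-1}}\sum_{i\in J_k}\Qual(\T^{(i)})\ =\ \sum_{i\in J}\Qual(\T^{(i)})\ \ge\ \alpha\sum_{i\in J}\Qual(\S^*\restrict{I_i})\ =\ \alpha(t-1)\,\opt ,
\]
using that $\{J_k\}_k$ partitions $J$ together with~(i). As there are exactly $t$ residue classes, some $k$ satisfies $\Qual(\S^{(k)})\ge\frac{\alpha(t-1)}{t}\opt$, and since the algorithm returns the best of the $\S^{(k)}$ we get $\Qual(\S)\ge\frac{\alpha(t-1)}{t}\opt$.

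I expect the main obstacle to be precisely the window/residue bookkeeping: verifying that the windows of a fixed residue class partition $\range\tau$ with the right clipping at both ends, that each transition is internal to exactly $t-1$ windows overall yet lies inside a single block for all but one residue class, and that the leftover boundary transitions are accounted for correctly — for $\Psi=\max$ they are simply discarded by non-negativity, which is the only point where a (mild and standard, satisfied by $\qualcap$ and $\qual_\triangle$) sign assumption on $\qual$ enters.
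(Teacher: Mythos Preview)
Your proposal is correct and follows essentially the same approach as the paper: tile $\range\tau$ into windows by residue class, apply $\A$ per window, count that every transition is internal to exactly $t-1$ windows, and average over the $t$ residue classes. You are in fact more careful than the paper's own proof in explicitly verifying feasibility of the $\S^{(k)}$ and in flagging the implicit nonnegativity assumption on $\qual$ needed to discard the boundary transitions (the paper silently writes an equality where only an inequality holds).
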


\newcommand{\J}{I}
\newcommand{\Skit}[2]{\S_k\restrict{#2}^#1}
\newcommand{\Skoit}[2]{\S^*_k\restrict{#2}^#1}
\begin{proof}
The algorithm's output $\S$ is the candidate with optimal profit and thus has at least average profit over all $t$ candidate solutions:
$\Qual(\S) \geq 1/t\cdot\sum_{k\in\range{t}} \Qual(\S_k)$.
Let $\J_k\coloneqq\{b\leq\tau\mid j\in\mathbb N\colon b=(k+1)+j\cdot t\}$ denote the set of values that $i$ takes in the $k$-th iteration of the foreach loop.
For $r\in\mathbb N_{>0}$,
let $\Qual(\Skit{r}{i}) \coloneqq \sum_{j\in\range{i,i+r-2}} \qual(S_j, S_{j+1})$ denote the quality of~$\S_k\coloneqq(S_j)_{j\in\range\tau}$ restricted to~$\Git{r}{i}$.
As transition qualities are non-negative, we have $\Qual(\S_k) \geq \Qual(\Skit{k}{1}) + \sum_{i\in \J_k}\Qual(\Skit{t}{i})$ and thus $\Qual(\S)\geq 1/t\cdot\sum_{k\in\range{t}}\big(\Qual(\Skit{k}{1}) + \sum_{i\in \J_k}\Qual(\Skit{t}{i})\big)$.

\newcommand{\Qualopt}[2]{\Qual^*\restrict{#2}^{#1}}
Let $\Qualopt{r}{i}$ be the optimal quality achievable for $\Git{r}{i}$.
Since $\A$ is an $\alpha$-approximation for~$\Mt$,
we have for all~$k\in\range t$ that $\Qual(\Skit{k}{1}) \geq \alpha\Qualopt{k}{1}$
and also $\Qual(\Skit{t}{i}) \geq \alpha\Qualopt{t}{i}$ for all $i\in \J_k$.
By construction, $\range{2,\tau}$ is the disjoint union of all $\J_k$ and thus
\[\Qual(\S)
\geq 1/t\cdot\sum_{k\in\range{t}}\big(\alpha\Qualopt{k}{1} + \sum_{i\in \J_k}\alpha\Qualopt{t}{i}\big)
= \alpha/t\cdot\big(\sum_{k\in\range{t}}\Qualopt{k}{1} +  \sum_{i\in\range{2,\tau}}\Qualopt{t}{i}\big).\]

Let $\S^* \coloneqq (S^*_j)_{j\in\range\tau}$ be an optimal multistage solution.
As by definition $\Qualopt{r}{i}\geq \Qual(\Skoit{r}{i})$, we have
$\Qual(\S)
\geq \alpha/t\cdot\big(
\underbrace{\sum_{k\in\range{t}}\Qual(\Skoit{k}{1})}_{(a)}
+ \underbrace{\sum_{i\in\range{2,\tau}}\Qual(\Skoit{t}{i})}_{(b)}
\big)$.

For each fixed~$j\in\range{t-1}$, the term $\qual(S^*_j,S^*_{j+1})$ appears exactly $t-j$ times in $(a)$ (namely once for each $k\in\range{j+1,t}$) and exactly $j-1$ times in $(b)$ (namely once for each $i\in\range{2,j}$).
Considering any larger~$j\in\range{t,\tau-1}$, the term $\qual(S^*_j,S^*_{j+1})$ does not appear in $(a)$ and exactly $t-1$ times in $(b)$ (namely once for each $i\in [j-t+2,j]$).
We thus have
\[\sum_{k\in\range{t}}\Qual(\Skoit{k}{1}) +  \sum_{i\in\range{2,\tau}}\Qual(\Skoit{t}{i})
= (t-1)\cdot\sum_{j\in\range{\tau-1}}\qual(S_j^*, S_{j+1}^*)
= (t-1) \Qual(\S^*)\]
and can conclude
$\Qual(\S) \geq \alpha (t-1)/t\cdot\Qual(\S^*).$
\end{proof}


\subsection{Approximating two stages}\label{subsec:approx-onion}

Generalizing the results and proof techniques of \cite{CTW22}, we present an algorithm that computes an approximate solution for the $2$-stage restriction of any $\qualcap$-MSP where the underlying SP has a certain property.

\begin{definition}\label{def:preficient}
An SP $\P\coloneqq(\mathbb G, f, m, \psi)$ is called \emph{preficient} (short for \emph{preference efficient})
if there is an algorithm~$\B(G,Z)$ that solves the following problem in polynomial time:
Given a graph~$G\in \mathbb G$ and subset $Z\subseteq X(G)$,
compute an optimal solution $S=\argmax_{S'\in f^*(G)} |S'\cap Z|$.
Such an algorithm~$\B$ is called a \emph{preficiency algorithm} for~$\P$.
An MSP is called preficient if its underlying SP is preficient.
\end{definition}

Note that for a preficient SP~$\P$, the set~$X_\P(G)$ is trivially computable in  polynomial time: a~graph element~$x\in X(G)$ is allowed w.r.t.~$\P$ if and only if it is in a solution computed by~$\B(G,\{x\})$.\smallskip

Provided the existence of a preficiency algorithm~$\B$ for~$\P$,
the following algorithm (see Algorithm~\ref{algo:2stage-apx} for pseudocode) approximates $\M\restrict 2$ for $\M = (\P,\qualcap)$:

Given a 2-stage graph~$\G=(G_1,G_2)$,
we generate candidate $2$-stage solutions in a loop while storing the currently best overall solution throughout.
In the loop, with iterations indexed by $i=1,2,...$,
we consider a set~$Y$ that keeps track of \emph{intersection elements} which have not been in a solution for~$G_1$ in any previous iteration;
we initialize~$Y$ with~$X_\cap\coloneqq X_\P(G_1)\cap X_\P(G_2)$ and update $Y$ at the beginning of each iteration.
In iteration~$i$,
we use~$\B$ to find a solution $S_1^{(i)}\in f^*(G_1)$ that optimizes $\qualcap(S_1^{(i)}, Y)$;
we then use~$\B$ to find a solution~$S_2^{(i)}\in f^*(G_2)$ that optimizes~$\qualcap(S_1^{(i)},S_2^{(i)})$.
The loop stops as soon as~$Y$ is empty;
the output is the $2$-stage solution $(S_1,S_2)$ with maximum quality over all candidate solutions.

The approximation ratio depends on the input's intertwinement $\chi=\chi_\P(\G) = |X_\cap|$.

\begin{theorem}\label{thm:2-stage-apx}
Consider an MSP $\M = (\P,\qualcap)$ with preficient~$\P$.
Then, Algorithm~\ref{algo:2stage-apx} is a polynomial-time $1/\!\sqrt{2\chi}$-approximation algorithm for $\M\restrict 2$.
\end{theorem}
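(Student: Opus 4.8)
The plan is to analyze the two loops separately and argue that at least one of them produces a $2$-stage solution of quality at least $\Qual(\S^*)/\sqrt{2\chi}$, where $\S^*=(S_1^*,S_2^*)$ is an optimal feasible $2$-stage solution with value $\qual_\cap(S_1^*,S_2^*)=|S_1^*\cap S_2^*|\eqqcolon \opt$. Write $Z\coloneqq S_1^*\cap S_2^*\subseteq X_\cap$, so $|Z|=\opt$ and $\opt\le\chi$. Fix a threshold $\theta\coloneqq\sqrt{\opt/2}$ (intuitively $\sqrt{\chi/2}$ in the worst case); the argument splits on whether the optimal solution ``concentrates'' on a single intersection element or ``spreads out''.

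First I would handle the easy case for the first loop. For each $x\in X_\cap$, the preficiency algorithm $\B$ first computes $S_1'\in f^*(G_1)$ maximizing $\qual_\cap(\{x\},S_1')=\1{x\in S_1'}$, and then $S_2'\in f^*(G_2)$ maximizing $|S_1'\cap S_2'|$. The key observation is that if there exists an optimal first-stage solution containing $x$ (which holds for every $x\in Z$, since $x\in S_1^*$), then $\B$ returns such an $S_1'$; and then the second call yields $|S_1'\cap S_2'|\ge|S_1'\cap S_2^*|$, since $S_2^*\in f^*(G_2)$ is a candidate. Hence if some single element $x$ of an optimal first-stage solution can be ``pushed'' into many common elements — concretely, if for some $x\in Z$ the best $S_1'\ni x$ from $\B$ satisfies $|S_1'\cap S_2^*|\ge\theta$ — then loop one already attains quality $\ge\theta$. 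So it remains to treat the complementary case: for every $x\in Z$, the first-stage solution $S_1^{(x)}$ returned by $\B(G_1,\{x\})$ has $|S_1^{(x)}\cap S_2^*|<\theta$.

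The heart of the argument is the second loop. Here $Y$ starts as $X_\cap\supseteq Z$. In iteration $i$, $\B$ returns $S_1^{(i)}$ maximizing $|Y\cap S_1^{(i)}|$ over $f^*(G_1)$; since $S_1^*$ is a candidate and $Y\supseteq Z\cap(\text{not-yet-covered})$, we get $|Y\cap S_1^{(i)}|\ge|Y\cap S_1^*|$, and in particular as long as some element of $Z$ remains uncovered, $S_1^{(i)}$ covers at least that many. Then $S_2^{(i)}$ from the second $\B$-call gives $|S_1^{(i)}\cap S_2^{(i)}|\ge|S_1^{(i)}\cap S_2^*|$. I would argue as follows: if in some iteration the first-stage solution $S_1^{(i)}$ satisfies $|S_1^{(i)}\cap S_2^*|\ge\theta$, then that iteration's candidate $(S_1^{(i)},S_2^{(i)})$ has quality $\ge\theta$ and we are done. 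Otherwise every iteration covers fewer than $\theta$ new elements of $Z$ — wait, more carefully: in each iteration the newly-covered elements of $Z$ form a subset of $S_1^{(i)}\cap Z\subseteq S_1^{(i)}\cap S_2^*$, so fewer than $\theta$ new $Z$-elements are covered per iteration. Since $\B$ maximizes coverage of the \emph{remaining} $Y$, and each $x\in Z$ that is still in $Y$ could individually be covered (it lies in some optimal first-stage solution), the greedy choice covers as much of $Z\cap Y$ as the best single-element push would — but the crucial point is a counting one: after at most $\lceil|Z|/1\rceil$ iterations all of $Z$ is covered, and since each iteration adds $<\theta$ elements of $Z$, the loop runs for $>|Z|/\theta=\opt/\theta=2\theta$ iterations, hence $>\theta$ iterations. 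In each such iteration the quality is $\ge$ (number of $Z$-elements newly covered), and summing, \emph{some} iteration must have quality $\ge|Z|/(\text{number of iterations})$; combined with the per-iteration bound, this is where the $\sqrt{2\chi}$ factor emerges — balancing $\theta$ against $\opt/\theta$ gives $\theta=\sqrt{\opt/2}\ge\opt/\sqrt{2\opt}\ge\opt/\sqrt{2\chi}$.

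The main obstacle — and the step I would spend the most care on — is making the second-loop counting rigorous: one must show that if no iteration individually achieves quality $\theta$, then the loop necessarily runs long enough that \emph{some} iteration's solution has quality at least $\opt/\sqrt{2\chi}$, using that $\B$ maximizes coverage of $Y$ (not of $Z$) and that $Y$ shrinks exactly by the first-stage picks. In particular I would need the monotonicity fact that $|Y\cap S_1^{(i)}|\ge|Y\cap S_1^*|$ at each step (since $S_1^*\in f^*(G_1)$), so the number of uncovered $Z$-elements strictly decreases whenever any remain, guaranteeing termination and the iteration-count lower bound; then a pigeonhole over the $\ge 2\theta$ iterations delivers a single iteration with $Z$-coverage $\ge\opt/(2\theta)=\theta$. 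Finally I would note termination (the stopping condition $\qual_\cap(Y,S_1^{(i)})=0$ is reached after at most $|X_\cap|\le\chi$ iterations, since $|Y|$ strictly decreases) and that every step runs $\B$ a constant number of times, so the whole algorithm is polynomial; collecting the two cases yields $\Qual(S_1,S_2)\ge\opt/\sqrt{2\chi}$, i.e. a $(1/\sqrt{2\chi})$-approximation.
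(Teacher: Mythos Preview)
Your overall strategy---split on whether some iteration already achieves a threshold quality, else derive a contradiction---matches the paper's, and you correctly identify the key inequality $|Y\cap S_1^{(i)}|\geq |Y\cap S_1^*|$. But the contradiction step is broken, and the missing idea is precisely the one the paper hinges on.

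First, your pigeonhole runs the wrong way. You argue: if every iteration's newly-covered $Z$-count $r_i^*\coloneqq |S_1^{(i)}\cap Z\cap Y|$ is below $\theta$, then since $\sum_i r_i^*=|Z|=\opt$ the loop runs for $k>\opt/\theta=2\theta$ iterations, and ``pigeonhole delivers an iteration with $Z$-coverage $\geq\opt/(2\theta)=\theta$''. But pigeonhole only gives $\max_i r_i^*\geq\opt/k$, and with $k>2\theta$ this is \emph{less} than $\theta$---perfectly consistent with the bad assumption, so no contradiction. Relatedly, from $|Y\cap S_1^{(i)}|\geq|Y\cap S_1^*|$ you cannot conclude that the number of uncovered $Z$-elements strictly decreases: $S_1^{(i)}$ may cover many elements of $Y\setminus Z$ and none of $Z$.

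What is missing is that this same inequality bounds the number of new elements of $X_\cap$ (not of $Z$) covered in iteration $i$. Writing $r_i\coloneqq |S_1^{(i)}\cap Y|$, you have $r_i\geq|Y\cap S_1^*|\geq|Y\cap Z|=\opt-\sum_{j<i}r_j^*$. Under the bad assumption $r_j^*<\opt/\sqrt{2\chi}$ for all $j$, this yields $r_i\geq\opt\bigl(1-(i-1)/\sqrt{2\chi}\bigr)$. Now sum the $r_i$ over $i=1,\dots,\lceil\sqrt{2\chi}\rceil$; since $\sum_i r_i\leq|X_\cap|\leq\chi$ (the $r_i$ count disjoint pieces of $X_\cap$), and using $\opt>\sqrt{2\chi}$ (the complementary case $\opt\leq\sqrt{2\chi}$ is exactly where the first loop's guarantee $\apx\geq 1$ suffices), the sum strictly exceeds $\chi$, a contradiction. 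The crux is this double bookkeeping: $r_i$ is bounded \emph{below} via the uncovered-$Z$ count $r_j^*$, while $\sum r_i$ is bounded \emph{above} by $\chi$. Your argument tracks only the $r_i^*$ side and never invokes the $\chi$ cap, which is why it cannot close.
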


\begin{algorithm}[tb]
\caption{
	Approximation of $\M\restrict 2$ for~$\M = (\P, \qualcap)$ with preficient~$\P$
	\label{algo:2stage-apx}
}
\KwIn{Enriched $2$-stage graph $\G=(G_1,G_2)$, preficiency algorithm~\B for~$\P$}
\KwOut{$2$-stage solution $\S=(S_1,S_2)$}
$(S_1, S_2)\leftarrow (\varnothing, \varnothing)$\;
\For{$i = 1,2,...$}{
	$Y\leftarrow X_\cap\setminus\bigcup_{j\in\range{i-1}}S_1^{(j)}$\;
	\lIf{$Y = \varnothing$}{\Return{$(S_1, S_2)$}}
	$S_1^{(i)}\leftarrow \B(G_1,Y)$\;
	$S_{2}^{(i)}\leftarrow \B(G_2, S_1^{(i)})$\;
	\lIf{$\qualcap(S_1^{(i)}, S_2^{(i)}) \geq \qualcap(S_1, S_2)$}{%
		$(S_1, S_2)\leftarrow (S_1^{(i)}, S_2^{(i)})$}
}
\end{algorithm}

\begin{proof}\let\qed\relax
Let $\B$ be a preficiency algorithm for $\P$ and~$\G$ a 2-stage graph.
We can assume w.l.o.g.\ that $X_\cap$ is non-empty and thus $\opt\geq 1$.
Clearly, the first iteration establishes $\apx\geq 1$.

In each iteration~$i$ of the loop, at least one element of~$X_\cap$ that has not been in any previous first stage solution is contained in a solution for~$G_1$ (otherwise the loop terminates) and hence the loop terminates in polynomial time.
Let~$k$ denote the number of iterations.
For any~$i\in\range{k}$, let $(S_1^{(i)},S_2^{(i)})$ denote the 2-stage solution computed in the $i$th iteration.
Let~$(S^*_1, S^*_2)$ denote an optimal 2-stage solution and 	$S^*_\cap\coloneqq S^*_1\cap S^*_2\subseteq X_\cap$ its intersection
(note that $S^*_1\cap X_\cap\setminus S^*_\cap$ may be non-empty).
Let~\smash{$R_i \coloneqq (S_1^{(i)} \cap X_\cap) \setminus \bigcup_{j\in\range{i-1}}R_j$}
denote the set of intersection elements that are in $S_1^{(i)}$ but not in~$S_1^{(j)}$ for any previous iteration~$j < i$;
let~$r_i \coloneqq |R_i|$.
Note that in iteration~$i$,
the algorithm first searches for a solution $S_1^{(i)}\in f^*(G_1)$ that maximizes
\[
\qualcap(S_1^{(i)}, X_\cap \setminus\bigcup_{j\in\range{i-1}}S_1^{(j)})
= |S_1^{(i)}\cap \big( X_\cap\setminus\bigcup_{j\in\range{i-1}} R_j\big)| = r_i.\] 
We define
$R^*_i \coloneqq (S_1^{(i)} \cap S^*_\cap)\setminus\bigcup_{j\in\range{i-1}}R^*_j$ 
and $r^*_i\coloneqq |R^*_i|$
equivalently to~$R_i$, but w.r.t.~$S^*_\cap$ instead of~$X_\cap$ (cf.~\cref{fig:m-and-r}).
Thus, $R_i^*$ contains those elements of $S^*_\cap$ that are selected (into~$S_1^{(i)}$) for the first time over all iterations.
Observe that $R_i\cap S^*_\cap = R^*_i$.
\begin{figure}[tb]
	\centering
	\begin{tikzpicture}[text height=10pt,xscale=.9,yscale=-.7]
	\useasboundingbox[draw,blue] (0,0) rectangle (10,3);
	\tikzset{box/.style={rounded corners,thick}}

	\tikzset{firstBoxOuter/.style={box,red!20}}
	\tikzset{firstBox/.style={box,red!50}}
	\tikzset{firstBoxInner/.style={box,red!80}}

	\tikzset{secondBoxOuter/.style={box,\yelloworange!20}}
	\tikzset{secondBox/.style={box,\yelloworange!50}}
	\tikzset{secondBoxInner/.style={box,\yelloworange!80}}

	\tikzset{thirdBoxOuter/.style={box,mygreen!20}}
	\tikzset{thirdBox/.style={box,mygreen!50}}
	\tikzset{thirdBoxInner/.style={box,mygreen!80}}

	\newcommand{\eCapBox}{(0, 0) rectangle (10, 3.75)}
	\newcommand{\mCapBox}{(2, 0) rectangle (8, 3)}
	\newcommand{\firstMBox}{(-1.5,.25) rectangle (6,2.05)}
	\newcommand{\secondMBox}{(4,1) rectangle (7,4.55)}
	\newcommand{\thirdMBox}{(4.25,.5) rectangle (11.5,1.8)}

	\fill[box,\lightgray] \eCapBox;
	\fill[box,\darkgray] \mCapBox;

	\begin{scope}
		\fill[thirdBoxOuter] \thirdMBox;
		\clip \eCapBox;
		\fill[thirdBox] \thirdMBox;
		\clip \mCapBox;
		\fill[thirdBoxInner] \thirdMBox;
	\end{scope}

	\begin{scope}
		\clip \eCapBox;
		\fill[box,\lightgray] \secondMBox;
		\clip \mCapBox;
		\fill[box,\darkgray] \secondMBox;
	\end{scope}
	\begin{scope}
		\fill[secondBoxOuter] \secondMBox;
		\clip \eCapBox;
		\fill[secondBox] \secondMBox;
		\clip \mCapBox;
		\clip \secondMBox;
		\fill[secondBoxInner] \secondMBox;
	\end{scope}

	\begin{scope}
		\clip \eCapBox;
		\fill[box,\lightgray] \firstMBox;
		\clip \mCapBox;
		\fill[box,\darkgray] \firstMBox;
	\end{scope}
	\begin{scope}
		\fill[firstBoxOuter] \firstMBox;
		\clip \eCapBox;
		\fill[firstBox] \firstMBox;
		\clip \mCapBox;
		\clip \firstMBox;
		\fill[firstBoxInner] \firstMBox;
	\end{scope}

	\draw[box] \thirdMBox;
	\draw[box] \secondMBox;
	\draw[box] \firstMBox;
	\draw[box] \mCapBox;
	\draw[box] \eCapBox;

	\tikzset{label/.style={rectangle,rounded corners,fill=gray!1,inner sep=1pt,fill opacity=.8,text opacity=1}}
	\contourlength{0.06em}
	\newcommand{\xOffset}{0.33}
	\newcommand{\yOffset}{-0.38}
	\newcommand{\xOffsetS}{0.1}
	\newcommand{\yOffsetS}{0}
	\node[] at($(0,3.75)+(\xOffset,\yOffset)$) {\contour{white}{$X_\cap$}};
	\node[] at ($(2,3)+(\xOffset,\yOffset)$) {\contour{white}{$S^*_\cap$}};
	\node[] at ($(-1.5,2)+(\xOffsetS+\xOffset,\yOffsetS+\yOffset)$) {\contour{white}{$S^{(1)}_1$}};
	\node[] at ($(0,2)+(\xOffset,\yOffset)$) {\contour{white}{$R_1$}};
	\node[] at ($(2,2)+(\xOffset,\yOffset)$) {\contour{white}{$R^{*}_1$}};
	\node[] at ($(4,4.5)+(\xOffsetS+\xOffset,\yOffsetS+\yOffset)$) {\contour{white}{$S^{(2)}_1$}};
	\node[] at ($(4,3.75)+(\xOffset,\yOffset)$) {\contour{white}{$R_2$}};
	\node[] at ($(4,3)+(\xOffset,\yOffset)$) {\contour{white}{$R^{*}_2$}};
	\node[] at ($(11.5,1.75)+(-\xOffsetS-\xOffset,\yOffsetS+\yOffset)$) {\contour{white}{$S^{(3)}_1$}};
	\node[] at ($(10,1.75)+(-\xOffset,\yOffset)$) {\contour{white}{$R_3$}};
	\node[] at ($(8,1.75)+(-\xOffset,\yOffset)$) {\contour{white}{$R^{*}_3$}};
\end{tikzpicture}%
	\caption{%
		Visualization of the relationships between $X_\cap$, $S^*_\cap$, $S^{(i)}_1$, $R_i$ and $R^*_i$ for $i\in\range{3}$.\label{fig:m-and-r}
	}
\end{figure}
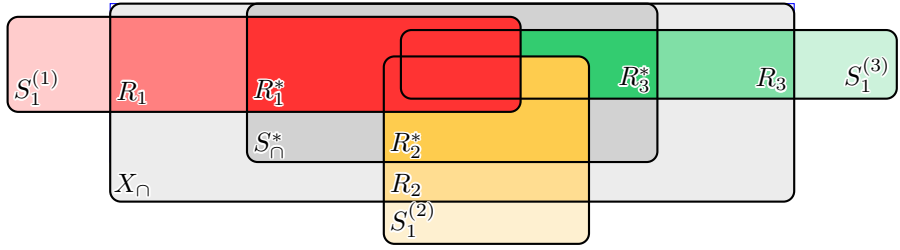

\newcommand{\iterpar}{x}%
Let $\iterpar \coloneqq \sqrt{2\chi}$.
For every $i\in\range{k}$ the algorithm chooses~$S_2^{(i)}$ such that $\qualcap(S_1^{(i)}, S_2^{(i)})$ is maximized.
Since we may choose $S_2^{(i)} = S^*_2$, it follows that $\apx \geq \max_{i\in\range{k}} r_i^*$.
Thus, if $\max_{i\in\range{k}} r_i^* \geq \opt/\iterpar$, we have a $1/\iterpar$-approximation.
In case $\opt\leq\iterpar$, any solution with profit at least~$1$ yields a $1/\iterpar$-approximation (which we trivially achieve as discussed above).
We show that we are always in one of the two cases.

\newcommand{\qceil}{\bar\iterpar}
Assume that $\opt > \iterpar$ and simultaneously $r_i^* < \opt/\iterpar$ for all~$i\in\range{k}$. In particular,~$i\geq 2$.
Since we distribute~$S^*_\cap$ over the disjoint sets $\{R^*_i\mid i\in\range{k}\}$, each containing less than~$\opt/\iterpar$ elements, we know that~$k > \iterpar$ (thus $k\geq\lceil \iterpar\rceil\eqqcolon\qceil$).
Recall that in iteration $i$, $Y=X_\cap\setminus\bigcup_{j\in\range{i-1}}S_1^{(j)}$.
Thus,
we have that $Y\cap \bigcup_{j\in\range{i-1}}R_j$ is empty and
the number of elements of $S^*_1$ that are counted towards $\qual(S^{(i)}_1,Y)$ is
$|(S^*_1 \cap X_\cap) \setminus \bigcup_{j\in\range{i-1}}R_j|
\geq |S^*_\cap \setminus \bigcup_{j\in\range{i-1}}R_j|
= |S^*_\cap \setminus \bigcup_{j\in\range{i-1}}R^*_j|$.
Hence, the latter term is a lower bound on~$r_i$ and we deduce:
\begin{linenomath}
		\begin{equation*}
			\begin{split}
				r_i&\geq \big|S^*_\cap\setminus \bigcup_{j\in\range{i-1}}R^*_j\big|
				= \opt -\sum_{j\in\range{i-1}}r^*_j\\
				&\stackrel{(\star)}\geq \opt - \sum_{j\in\range{i-1}} \frac{\opt}{\iterpar}
				= \opt\cdot\big(1 - \frac{i - 1}{\iterpar}\big)
				\geq \qceil\big(1 - \frac{i - 1}{\iterpar}\big).
			\end{split}
		\end{equation*}
\end{linenomath}
Thereby, strict inequality holds at $(\star)$ for $i\geq 2$.
This raises a contradiction:
\begin{linenomath}
		\begin{equation*}
			\begin{split}
				\chi
				&= |X_\cap|
				= \big|\bigcup_{i\in\range{k}} R_i\big|
				\geq
				\sum_{i\in\range{\qceil}} r_i
				\gneqq \sum_{i\in\range{\qceil}} \qceil \big(1 - \frac{i - 1}{\iterpar}\big)\\
				&= \qceil \cdot \big(\sum_{i\in\range{\qceil}}1 - \sum_{i\in\range{\qceil-1}}\frac{i}{\iterpar}\big)
				= \qceil \big(\qceil - \frac{(\qceil -1)\qceil}{2\iterpar}\big)\\
				&= \qceil^2 \big(1 - \frac{\qceil -1}{2\iterpar}\big)
				\geq \qceil^2 \big(1 - \frac{\iterpar}{2\iterpar}\big)
				= \frac{\qceil^2}{2}
				\geq \frac{\iterpar^2}{2}
				= \chi.
			\end{split}\hfill\qedsymbol
		\end{equation*}
\end{linenomath}
\end{proof}

We may mention that for the (preficient, see below) Multistage Perfect Matching problem there is (weak) evidence that a $\chi$-dependent ratio may be unavoidable~\cite{CTW22}.

\begin{observation}
The analysis in Theorem~\ref{thm:2-stage-apx} is \emph{tight} in the sense that Algorithm~\ref{algo:2stage-apx} cannot guarantee a better approximation ratio for arbitrary preficient SPs.
This is due to the fact that when the SP is the Perfect Matching Problem (see Appendix), we know from~\cite{CTW22} that there is an instance family for which the stripped down version of Algorithm~\ref{algo:2stage-apx} yields precisely this ratio.
This does not rule out that for some \enquote{simpler} SP, our algorithm may achieve a better approximation ratio.
\end{observation}


\section{Applications}\label{sec:applications}

We show preficiency for a variety of SPs.
Proving preficiency typically always follows the same pattern for these problems:
we modify (or assign) weights for those graph elements that are to be preferred
in a way that does not interfere with the feasibility of a solution;
we then apply a polynomial algorithm (as a black box) to solve the problem w.r.t.\ the modified weights.
The above \cref{thm:2-stage-apx,thm:tim-to-mim} then allow us to deduce approximation algorithms for the corresponding $\qualcap$-MSP.

In general, instead of manipulating the weights, one could (try to) carefully manipulate the arithmetic computations in the black box algorithm.
This then would typically also allow to consider non-negative weights (instead of strictly positive ones).
However, we refrain from doing so herein for clarity of exposition and general applicability to any black box algorithm where such arithmetic modifications may not be straightforward.


Let~$G=(V,E)$ be a graph with weights~$w\colon X\to\mathbb N_{>0}$ on its elements.
Let~$w(Z)\coloneqq \sum_{e\in Z}w(e)$ denote the weight of a subset~$Z\subseteq X$.
Given an element subset~$Y\subseteq X$ and some~$\varepsilon\in\mathbb Q$,
we define the \emph{modified weight function}~$\wmod\colon X\to\mathbb Q$
that is identical to~$w$ on~$X\setminus Y$
and~$\wmod(e)\coloneqq w(e)-\varepsilon$ for~$e\in Y$.

Consider some SP~$P$.
The modified weight function~$\wmod$ is \emph{well-behaved w.r.t.~$P$} if the following properties hold for any two edge sets~$Z,Z'\subseteq X$:
\begin{enumerate}[(i)]
	\item $\wmod(e)>0$ for all~$e\in X$;
	\item if~$w(Z')< w(Z)$, then~$\wmod(Z') < \wmod(Z)$;
	\item if~$w(Z') = w(Z)$ and~$|Z'\cap Y| > |Z\cap Y|$,
	then: if~$P$ is a minimization problem then $\wmod(Z') < \wmod(Z)$; otherwise ($P$ is a maximization problem) $\wmod(Z')>\wmod(Z)$.
\end{enumerate}
Naturally, $\varepsilon>0$ for minimization problems and $\varepsilon<0$ for maximization problems.

\newcommand{\thmtext}[1]{
There is a $1/\!\sqrt{8\chi}$-approximation algorithm for #1 and a $1/\!\sqrt{2\chi}$-approximation for $#1\restrict2$.
}


\subsection{Multistage Shortest \st Path}

As a first example, consider the classic problem of finding a shortest \st path, which can be easily formulated as an SP.
The corresponding MSP is introduced and shown to be NP-hard already for $2$-stage DAGs in~\cite{FNSZ20} (although they consider a slightly different definition, the NP-hardness proof directly translates to our formulation).

In a graph $G=(V,E)$ with edge weights $w\colon E\to \mathbb N_{>0}$ and two terminal vertices $s,t\in V$,
an edge set~$F\subseteq E$ is an \emph{\st path} in $G$
if it is of the form $\{v_1v_2, v_2v_3, ..., v_{k-1}v_k\}$ where $k\geq 2$, $v_1=s$ and $v_k=t$.
An \st path is a \emph{shortest \st path}
if for each \st path $F'$ we have $w(F)\leq w(F')$.

\begin{definition}[\MSPath]
Given a multistage graph~$(G_i)_{i\in\range\tau}$ with edge weights $w_i\colon E_i\to \mathbb N_{>0}$ and terminal vertices $s_i,t_i\in V_i$ for each~$i\in\range\tau$,
find a $\qualcap$-optimal multistage solution $(F_i)_{i\in\range\tau}$ such that for each $i\in\range\tau$,
$F_i\subseteq E_i$ is a shortest path from $s_i$ to $t_i$ in~$G_i$.
\end{definition}

\begin{theorem}\label{thm:MSPath-algo}
\thmtext{\MSPath}
\end{theorem}
\begin{proof}
We only need to show preficiency for~\MSPath and apply Theorems~\ref{thm:tim-to-mim} and~\ref{thm:2-stage-apx}.
Let $G=(V,E)$ be a graph with edge weights $w\colon E\to \mathbb N_{>0}$ and $Y\subseteq E$ the set of edges to be preferred.
Set $\varepsilon\coloneqq 1/(|E|+1)$.
We use an arbitrary polynomial-time algorithm for computing a shortest \st path in $G$ with the modified weight function~$\wmod$, e.g.~\cite{Dij59}, and denote its output by~$F$.
Since $\wmod$ is well-behaved, $F$ is a shortest \st path in~$G$ such that~$|F\cap Y|$ is maximum.
\end{proof}


\subsection{Further examples}
The brevity of the above proof under the new framework is no exception, as we were able to apply nearly identical proofs for a range of other $\qualcap$-MSP formulations of classical combinatorial problems, which we give as further examples of our framework's utility.
To justify the need for approximation algorithms, proofs that the presented multistage problems are NP-hard (even when restricted to two stages) can be found in the appendix. 
There, one can also find the precise problem definitions, variations, and preficiency proofs.
We stress that there are no known constant-ratio approximations for any of the problems.

\begin{theorem}\label{thm:meta-thm}
Consider the MSP~$\M=(\P,\qualcap)$, where $\P$ is Minimum Weight Perfect Matching, Minimum \st Cut, Weakly Bipartite Maximum Cut, Minimum Weight Bipartite Vertex Cover, or Maximum Weight Bipartite Independent Set. There is a $1/\!\sqrt{8\chi}$-approximation algorithm for~$\M$ and a $1/\!\sqrt{2\chi}$-approximation for $\M\restrict2$.
\end{theorem}



\section{Conclusion}

In this paper, we considered multistage generalizations of Subgraph Problems that require an optimal solution in each individual stage while the transition quality is to be optimized.
We provided two framework approximation algorithms for such MSPs: Algorithm~\ref{algo:tm-to-mim} allows to generalize any $2$-stage algorithm for any MSP to an unrestricted number of stages; Algorithm~\ref{algo:2stage-apx} is a $2$-stage approximation algorithm for any preficient $\qualcap$-MSP.
We then showcased the ease-of-use of our results by applying them to several natural MSP variants of well-known classical graph problems.

It remains open, whether these algorithms are best possible for any of the considered MSPs.
In fact, there \emph{cannot} be a general result establishing tightness for the whole class of MSPs, as some problems are actually polynomial-time solvable (see, e.g., the vertex variants of \MMinCE above).
For ease of exposition, we have only considered multistage generalizations of \emph{subgraph} problems in this paper.
However, our techniques are also applicable to more general multistage \emph{subset} problems, i.e., without the need of an underlying graph.
This can be easily understood as all our proofs solely work on a set system on ground set $X$.
Alas, we know of no natural multistage non-subgraph optimization problem that simultaneously is (a)~NP-hard, (b)~preficient, and (c)~not trivially reformulated as an MSP.

For a further investigation on the interdependency between the two concurring multistage optimization objectives, it might be of interest to consider another natural edge case: require optimal transition quality for each transition while allowing suboptimal, but feasible per-stage solutions.


\bibliography{main2.bib}

\begin{thebibliography}{10}

\bibitem{BEK21}
Evripidis Bampis, Bruno Escoffier, and Alexander Kononov.
\newblock {{LP-Based Algorithms}} for {{Multistage Minimization Problems}}.
\newblock In Christos Kaklamanis and Asaf Levin, editors, {\em Approx. {{Online
  Algorithms}}}, Lecture {{Notes}} in {{Computer Science}}, pages 1--15,
  {Cham}, 2021. {Springer International Publishing}.
\newblock \href {https://doi.org/10.1007/978-3-030-80879-2_1}
  {\path{doi:10.1007/978-3-030-80879-2_1}}.

\bibitem{BELP18}
Evripidis Bampis, Bruno Escoffier, Michael Lampis, and Vangelis~Th Paschos.
\newblock Multistage {{Matchings}}.
\newblock In David Eppstein, editor, {\em 16th {{Scand}}. {{Symp}}. {{Workshop
  Algorithm Theory SWAT}} 2018}, volume 101 of {\em Leibniz {{International
  Proceedings}} in {{Informatics}} ({{LIPIcs}})}, pages 7:1--7:13, {Dagstuhl,
  Germany}, 2018. {Schloss Dagstuhl\textendash Leibniz-Zentrum fuer
  Informatik}.
\newblock \href {https://doi.org/10.4230/LIPIcs.SWAT.2018.7}
  {\path{doi:10.4230/LIPIcs.SWAT.2018.7}}.

\bibitem{BEST21}
Evripidis Bampis, Bruno Escoffier, Kevin Schewior, and Alexandre Teiller.
\newblock Online {{Multistage Subset Maximization Problems}}.
\newblock {\em Algorithmica}, 83(8):2374--2399, August 2021.
\newblock \href {https://doi.org/10.1007/s00453-021-00834-7}
  {\path{doi:10.1007/s00453-021-00834-7}}.

\bibitem{BET22}
Evripidis Bampis, Bruno Escoffier, and Alexandre Teiller.
\newblock Multistage knapsack.
\newblock {\em Journal of Computer and System Sciences}, 126:106--118, June
  2022.
\newblock \href {https://doi.org/10.1016/j.jcss.2022.01.002}
  {\path{doi:10.1016/j.jcss.2022.01.002}}.

\bibitem{BFK22}
Robert Bredereck, Till Fluschnik, and Andrzej Kaczmarczyk.
\newblock When {{Votes Change}} and {{Committees Should}} ({{Not}}).
\newblock In {\em Thirty-{{First International Joint Conference}} on
  {{Artificial Intelligence}}}, volume~1, pages 144--150, July 2022.
\newblock \href {https://doi.org/10.24963/ijcai.2022/21}
  {\path{doi:10.24963/ijcai.2022/21}}.

\bibitem{CTW22}
Markus Chimani, Niklas Troost, and Tilo Wiedera.
\newblock Approximating {{Multistage Matching Problems}}.
\newblock {\em Algorithmica}, 84(8):2135--2153, August 2022.
\newblock \href {https://doi.org/10.1007/s00453-022-00951-x}
  {\path{doi:10.1007/s00453-022-00951-x}}.

\bibitem{Dij59}
E.~W. Dijkstra.
\newblock A note on two problems in connexion with graphs.
\newblock {\em Numer. Math.}, 1(1):269--271, December 1959.
\newblock \href {https://doi.org/10.1007/BF01386390}
  {\path{doi:10.1007/BF01386390}}.

\bibitem{EMS14}
David Eisenstat, Claire Mathieu, and Nicolas Schabanel.
\newblock Facility {{Location}} in {{Evolving Metrics}}.
\newblock In Javier Esparza, Pierre Fraigniaud, Thore Husfeldt, and Elias
  Koutsoupias, editors, {\em Autom. {{Lang}}. {{Program}}.}, Lecture {{Notes}}
  in {{Computer Science}}, pages 459--470, {Berlin, Heidelberg}, 2014.
  {Springer}.
\newblock \href {https://doi.org/10.1007/978-3-662-43951-7_39}
  {\path{doi:10.1007/978-3-662-43951-7_39}}.

\bibitem{FG06}
J{\"o}rg Flum and Martin Grohe.
\newblock {\em Parameterized {{Complexity Theory}}}.
\newblock Texts in {{Theoretical Computer Science}}. {{An EATCS Series}}.
  {Springer}, {Berlin, Heidelberg}, 2006.
\newblock \href {https://doi.org/10.1007/3-540-29953-X}
  {\path{doi:10.1007/3-540-29953-X}}.

\bibitem{Flu21}
Till Fluschnik.
\newblock A {{Multistage View}} on 2-{{Satisfiability}}.
\newblock In Tiziana Calamoneri and Federico Cor{\`o}, editors, {\em Algorithms
  {{Complex}}.}, Lecture {{Notes}} in {{Computer Science}}, pages 231--244,
  {Cham}, 2021. {Springer International Publishing}.
\newblock \href {https://doi.org/10.1007/978-3-030-75242-2_16}
  {\path{doi:10.1007/978-3-030-75242-2_16}}.

\bibitem{FNRZ22}
Till Fluschnik, Rolf Niedermeier, Valentin Rohm, and Philipp Zschoche.
\newblock Multistage {{Vertex Cover}}.
\newblock {\em Theory Comput Syst}, 66(2):454--483, April 2022.
\newblock \href {https://doi.org/10.1007/s00224-022-10069-w}
  {\path{doi:10.1007/s00224-022-10069-w}}.

\bibitem{FNSZ20}
Till Fluschnik, Rolf Niedermeier, Carsten Schubert, and Philipp Zschoche.
\newblock Multistage s-t {{Path}}: {{Confronting Similarity}} with
  {{Dissimilarity}} in {{Temporal Graphs}}.
\newblock In Yixin Cao, Siu-Wing Cheng, and Minming Li, editors, {\em 31st
  {{Int}}. {{Symp}}. {{Algorithms Comput}}. {{ISAAC}} 2020}, volume 181 of {\em
  Leibniz {{International Proceedings}} in {{Informatics}} ({{LIPIcs}})}, pages
  43:1--43:16, {Dagstuhl, Germany}, 2020. {Schloss Dagstuhl\textendash
  Leibniz-Zentrum f\"ur Informatik}.
\newblock \href {https://doi.org/10.4230/LIPIcs.ISAAC.2020.43}
  {\path{doi:10.4230/LIPIcs.ISAAC.2020.43}}.

\bibitem{FF56}
L.~R. Ford and D.~R. Fulkerson.
\newblock Maximal {{Flow Through}} a {{Network}}.
\newblock {\em Can. J. Math.}, 8:399--404, 1956/ed.
\newblock \href {https://doi.org/10.4153/CJM-1956-045-5}
  {\path{doi:10.4153/CJM-1956-045-5}}.

\bibitem{GJ79}
Michael~R Garey and David~S Johnson.
\newblock {\em Computers and {{Intractability}}: {{A Guide}} to the {{Theory}}
  of {{NP-Completeness}}}.
\newblock {WH Freeman}, 1979.

\bibitem{GP81}
M.~Gr{\"o}tschel and W.~Pulleyblank.
\newblock Weakly bipartite graphs and the {{Max-cut}} problem.
\newblock {\em Operations Research Letters}, 1(1):23--27, October 1981.
\newblock \href {https://doi.org/10.1016/0167-6377(81)90020-1}
  {\path{doi:10.1016/0167-6377(81)90020-1}}.

\bibitem{GTW14}
Anupam Gupta, Kunal Talwar, and Udi Wieder.
\newblock Changing {{Bases}}: {{Multistage Optimization}} for {{Matroids}} and
  {{Matchings}}.
\newblock In Javier Esparza, Pierre Fraigniaud, Thore Husfeldt, and Elias
  Koutsoupias, editors, {\em Autom. {{Lang}}. {{Program}}.}, Lecture {{Notes}}
  in {{Computer Science}}, pages 563--575, {Berlin, Heidelberg}, 2014.
  {Springer}.
\newblock \href {https://doi.org/10.1007/978-3-662-43948-7_47}
  {\path{doi:10.1007/978-3-662-43948-7_47}}.

\bibitem{HHK+21}
Klaus Heeger, Anne-Sophie Himmel, Frank Kammer, Rolf Niedermeier, Malte Renken,
  and Andrej Sajenko.
\newblock Multistage graph problems on a global budget.
\newblock {\em Theoretical Computer Science}, 868:46--64, May 2021.
\newblock \href {https://doi.org/10.1016/j.tcs.2021.04.002}
  {\path{doi:10.1016/j.tcs.2021.04.002}}.

\bibitem{KRZ21}
Leon Kellerhals, Malte Renken, and Philipp Zschoche.
\newblock Parameterized {{Algorithms}} for {{Diverse Multistage Problems}}.
\newblock In Petra Mutzel, Rasmus Pagh, and Grzegorz Herman, editors, {\em 29th
  {{Annu}}. {{Eur}}. {{Symp}}. {{Algorithms ESA}} 2021}, volume 204 of {\em
  Leibniz {{International Proceedings}} in {{Informatics}} ({{LIPIcs}})}, pages
  55:1--55:17, {Dagstuhl, Germany}, 2021. {Schloss Dagstuhl \textendash{}
  Leibniz-Zentrum f\"ur Informatik}.
\newblock \href {https://doi.org/10.4230/LIPIcs.ESA.2021.55}
  {\path{doi:10.4230/LIPIcs.ESA.2021.55}}.

\bibitem{KV08}
Bernhard Korte and Jens Vygen.
\newblock Combinatorial optimization: {{Theory}} and algorithms.
\newblock {\em Springer Third Ed. 2005}, 2008.

\bibitem{LP86}
L{\'a}szl{\'o} Lov{\'a}sz and Michael~D Plummer.
\newblock {\em Matching Theory}.
\newblock {Elsevier}, 1986.

\bibitem{Yan78}
Mihalis Yannakakis.
\newblock Node-and edge-deletion {{NP-complete}} problems.
\newblock In {\em Proc. {{Tenth Annu}}. {{ACM Symp}}. {{Theory Comput}}.},
  {{STOC}} '78, pages 253--264, {New York, NY, USA}, May 1978. {Association for
  Computing Machinery}.
\newblock \href {https://doi.org/10.1145/800133.804355}
  {\path{doi:10.1145/800133.804355}}.

\end{thebibliography}

\clearpage
\appendix


\section{Multistage Minimum (Weight) Perfect Matching}\label{subsection:MMinPM}
In a graph $G=(V,E)$ with edge weights $w\colon E\to \mathbb N_{>0}$,
an edge set $F\subseteq E$ is a \emph{perfect matching}
if each $v\in V$ is incident with exactly one edge in $F$.
A perfect matching $F$ has \emph{minimum weight}
if for each perfect matching $F'$ we have $w(F')\geq w(F)$.

\begin{definition}[\MMinWPM]
Given a multistage graph~$(G_i)_{i\in\range\tau}$ with edge weights $w_i\colon E_i\to \mathbb N_{>0}$ for each $i\in\range\tau$,
find a $\qualcap$-optimal multistage solution $(F_i)_{i\in\range\tau}$ such that for each~$i\in\range\tau$,
$F_i\subseteq E_i$ is a minimum weight perfect matching w.r.t.~$w_i$ in~$G_i$.
\end{definition}

$\MMinWPM\restrict{2}$ is NP-hard even if the union of all stages is bipartite~\cite{CTW22}.

\begin{theorem}\label{thm:MMinPM-algo}
\thmtext{\MMinWPM}
\end{theorem}

\begin{proof}
Follow the proof of Theorem~\ref{thm:MSPath-algo} using, e.g., \cite{LP86} for the efficient minimum weight perfect matching computation.
\end{proof}

This result does not contradict the linear lower bound on the approximation ratio discussed in~\cite{BELP18}, since they (a) minimize an objective function combining transition costs and per-stage solution quality and (b) do not consider intertwinement dependency. Note that there are graphs with linear intertwinement~$\chi=\Theta(|E|)$.


\section{Multistage Minimum \st Cut}
In a graph $G=(V,E)$ with edge weights $w\colon E\to \mathbb N_{>0}$, two vertices $s,t\in V$,
an edge set $F\subseteq E$ is an \emph{\st cut}
if there is no \st path in~$(V, E\setminus F)$.
An \st cut $F$ is \emph{minimum} if for each \st cut $F'$ we have $w(F') \geq w(F)$.

\begin{definition}[\MMinCE]
Given a multistage graph $(G_i)_{i\in\range\tau}$ with edge weights $w_i\colon E_i\to \mathbb N_{>0}$
and terminal vertices $s_i,t_i\in V_i$ for each $i\in\range\tau$,
find a $\qualcap$-optimal multistage solution~$(F_i)_{i\in\range\tau}$ such that for each~$i\in\range\tau$,
$F_i\subseteq E_i$ is a minimum \siti cut for $s_i,t_i$ in $G_i$.
\end{definition}

\begin{theorem}\label{thm:MMinCutSet-hardness}
$\MMinCE\restrict2$ is NP-hard, even if $s_1=s_2$, $t_1=t_2$, and the edges have uniform weights.
\end{theorem}

\begin{proof}
We will perform a reduction from the NP-hard problem \textsf{MaxCut}~\cite{GJ79} to the decision variant of \MMinCE: Given $G$, $(s_i,t_i)_{i\in\range\tau}$ and a number $\kappa\in\mathbb N$, is there a $\qualcap$-optimal multistage solution for \MMinCE with profit at least $\kappa$?
In \textsf{MaxCut}, one is given an undirected graph $G=(V,E)$ and a number $k\in\mathbb N$; the question is whether there is a vertex set $U\subseteq V$, such that $|\delta(U)|\geq k$.
In the first stage, we will construct a bundle of \st paths for each vertex of the original graph and in the second stage we will create two \st paths for each edge~(cf. \cref{fig:MMinCutSet}).
A minimum \st cut in the first stage will correspond to a vertex selection and a minimum \st cut in the second stage will allow us to count the number of edges that are incident to exactly one selected vertex.

Given an instance $\mathcal I \coloneqq \big(G=(V,E),k\big)$ of \textsf{MaxCut}, we will construct an equivalent instance $\mathcal J \coloneqq \big(\G, (s_i, t_i)_{i\in\range\tau}, \kappa\big)$ for \MMinCE.
Set $\kappa \coloneqq |E| + k$.
Start with a 2-stage graph~$\G\coloneqq (G_1,G_2)$ and vertices $s,t\in V_1\cap V_2$ that are used as terminals $s_i$ and $t_i$ in both stages.

In the first stage, create a vertex $x_v$ for each $v\in V$.
For each $v\in V$ and for each edge~$e\in\delta(v) \coloneqq \{e\in E\mid v\in e\}$, create a path of length~$3$ from $s$ to $x_v$ whose middle edge is labeled~$\alpha_v^e$ and a path of length~$3$ from~$x_v$ to~$t$ whose middle edge is labeled $\beta_v^e$.
Let $A_v\coloneqq \{\alpha_v^e\mid e\in\delta(v)\}$ and~$B_v\coloneqq \{\beta_v^e\mid e\in\delta(v)\}$.
Let $a^e_v$ ($b^e_v$) denote the endpoint of $\alpha^e_v$ ($\beta^e_v$) closer to~$x_v$ and $\bar a^e_v$ ($\bar b^e_v$) the other one.

The second stage reuses all $a$-, $\bar a$-, $b$- and $\bar b$-vertices and all $\alpha$- and $\beta$-edges.
For each edge $e = vw\in E$, add two vertices~$c^e_v, c^e_w$.
By adding edges $\{s c^e_w, c^e_w \bar a^e_w, a^e_w \bar a^e_v, a^e_v t\}$ and~$\{s b^e_w, \bar b^e_w b^e_v, \bar b^e_v c^e_v, c^e_v t\}$,
we construct two paths, $A^e$ and $B^e$, of length~$6$ from $s$ to $t$:
$A^e$ via the~$\alpha$-edges and $B^e$ via the $\beta$-edges.
The $c$-vertices are there to avoid unwanted edges in~$E_\cap$.

\begin{figure}
	\begin{center}
		\includegraphics[scale=.99]{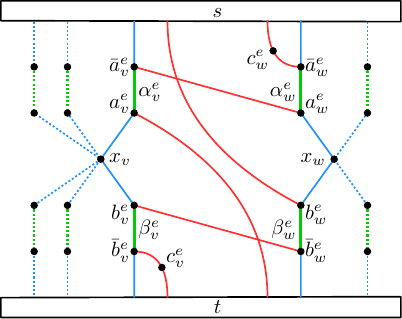}
		\caption{Thm.~\ref{thm:MMinCutSet-hardness}: Two vertex gadgets and one edge gadget for $e=vw$.
			Vertices $s$ and $t$ are enlarged.
			Edges in~$E_\cap$ are bold and green, edges in~$E_1\setminus E_\cap$ are blue, edges in~$E_2\setminus E_\cap$ are red.}\label{fig:MMinCutSet}
	\end{center}
\end{figure}

Since for each $v\in V$, a minimum cut $C_1$ in $G_1$ needs to cut all \st paths through~$x_v$, $C_1$ contains exactly $|\delta(v)|$ edges from these paths;
these are either all part of the $3$-paths from $s$ to $x_v$ or all part of the $3$-paths from $x_v$ to $t$.
We show that $\mathcal J$ is a yes-instance if and only if~$\mathcal I$ is a yes-instance:
\smallskip

\noindent "$\Leftarrow$" \tabto{10mm}%
Suppose there is some $S\subseteq V$, such that $|\delta(S)|\geq k$.
For each $v\in S$, add $A_v$ to $C_1$ and for each $v\in V\setminus S$, add $B_v$ to $C_1$.
Thus, $C_1$ is a minimum \st cut in $G_1$.
For~$e = vw\in \delta(S)$ assume w.l.o.g.\ $v\in S$ and add $\alpha_v^e$ and $\beta_w^e$ to $C_2$.
For~$e = vw\in E\setminus \delta(S)$, add arbitrarily either~$\{\alpha_v^e,\beta_w^e\}$ or $\{\alpha^e_w,\beta^e_v\}$ to $C_2$.
Thus $C_2$ is a minimum \st cut in $G_2$.

Consider some edge $e=vw\in E$.
If~$e\in\delta(S)$, assume w.l.o.g.\ $v\in S$ and $w\in V\setminus S$.
Then, $C_1\cap C_2$ contains two edges of $A^e\cup B^e$, namely $\alpha_v^e$ and $\beta_w^e$.
If $e\not\in \delta(S)$, $C_1\cap C_2$ contains exactly one edge of $A^e\cup B^e$.
Thus, $|C_1\cap C_2| = |\biguplus_{e\in E} C_1\cap C_2\cap (A^e\cup B^e)| = |E| + |\delta(S)|$.
\smallskip

\noindent "$\Rightarrow$" \tabto{10mm}%
Let $\C \coloneqq (C_1,C_2)$ be a multistage minimum \st cut with $|C_1\cap C_2| \geq |E| + k$.
Observe that, by construction of~$G_2$, $|C_1\cap C_2| = \sum_{e\in E}m_e$ with $m_e \coloneqq|C_1\cap C_2\cap (A^e\cup B^e)|\leq 2$ for each $e\in E$.
Thus, by pigeonhole principle, there are at least $k$ edges with $m_e= 2$.
Recall that w.l.o.g.\ and by optimality of the individual stages, we can assume for each $v\in V$ that~$C_1$ contains either all of $A_v$ or all of $B_v$ but no elements of the other.
This yields a selection~$U\subseteq V$: Select $v\in V$ into~$U$ if and only if $A_v\subseteq C_1$.
We observe that $m_e = 2$ then induces~$e\in \delta(U)$ and we obtain $|\delta(S)|\geq k$.
\end{proof}

\begin{theorem}
\thmtext{\MMinCE}
\end{theorem}
\begin{proof}
Follow the proof of Theorem~\ref{thm:MSPath-algo} using, e.g., \cite{FF56} for the efficient minimum \st cut computation.
\end{proof}


\subparagraph{Vertex variants.}
The problem of finding a minimum \st cut for each stage can also be optimized to maintain the same set of vertices that are connected to~$s$.
For a concise problem definition, we need new terminology:
in a graph $G=(V,E)$ with edge weights $w\colon E\to \mathbb N_{>0}$ and vertices $s,t\in V$,
a vertex set $S\subseteq V$ with $s\in S$, $t\not\in S$ is an \emph{\st separating partition};
$S$~is \emph{optimal}, if the induced \st cut~$\delta(S)\coloneqq\{e\in E\mid |e\cap S|=1\}$ has minimum weight.

Given a multistage graph $(G_i)_{i\in\range\tau}$ with edge weights $w_i\colon E_i\to \mathbb N_{>0}$ for each $i\in\range\tau$ and terminal vertices $s_i,t_i\in V_i$ for each $i\in\range\tau$,
we may ask for a $\qualcap$-optimal multistage solution~$(S_i)_{i\in\range\tau}$ such that for each~$i\in\range\tau$,
$S_i\subseteq V_i$ is an optimal \siti separating partition for~$s_i,t_i$ in~$G_i$.
The objective is to maximize the global quality w.r.t.\ intersection profit.
A natural variation is to consider $q'\coloneqq |S_i\cap S_{i+1}| + |(V_i\setminus S_i)\cap (V_{i+1}\setminus S_{i+1})|$ instead of~$\qualcap$.

\begin{observation}
Both vertex variants of \MMinCE (using~$\qualcap$ or $q'$, resp.) are polynomial-time solvable.
\end{observation}
\begin{proof}
Consider~$\qualcap$.
For each stage~$i$, choose the cardinality-wise largest optimal \siti separating partition~$S_i$.
Observe that~$S_i$ with $s_i\in S_i$ is unique, since every other optimal \siti separating partition~$S'_i$ is a strict subset of~$S_i$.
Clearly, $S_i$ can be found in polynomial time.
Since we optimize the intersection of the stage-wise partitions, we obtain a global optimum by having each $S_i$ maximal.

Consider~$q'$.
The problem can be easily reduced to a single-stage minimum \st cut problem as shown in~\cite{BEK21}:
Add disjoint copies of each stage to an empty graph, and two new vertices~$s^*$ and~$t^*$.
Add an edge with infinite weight from~$s^*$ to each~$s_i$ and one from each~$t_i$ to~$t^*$.
For each occurrence of a vertex in two adjacent stages, add an edge with small positive weight~$\varepsilon$ between the two vertex copies.
A minimum $s^*$-$t^*$-cut in this graph directly induces a minimum \siti cut in each stage such that the number of vertices that are in~$(S_i\cup S_{i+1})\setminus(S_i\cap S_{i+1})$ is minimized and thus $q'$ is maximized.
\end{proof}


\section{Multistage Weakly Bipartite Maximum Cut}
In a graph $G=(V,E)$ with edge weights $w\colon E\to \mathbb N_{>0}$,
a vertex set $U\subseteq V$ induces a \emph{maximum cut} $\delta(U)\coloneqq \{e\in E\mid |E\cap U|=1\}$
if for each vertex set $U'\subseteq V$ we have $w\big(\delta(U')\big)\leq w\big(\delta(U)\big)$.
The (unfortunately named) class of \emph{weakly bipartite graphs} is defined in~\cite{GP81} and contains in particular also planar and bipartite graphs.
While the precise definition is not particularly interesting to us here,
we make use of the fact that a maximum cut can be computed in polynomial time on weakly bipartite graphs~\cite{GP81}.

\begin{definition}[\MMaxWBC]
Given a multistage graph $(G_i)_{i\in\range\tau}$ with edge weights $w_i\colon E_i\to \mathbb N_{>0}$ for each $i\in\range\tau$ where each stage is weakly bipartite,
find a $\qualcap$-optimal multistage solution $(F_i)_{i\in\range\tau}$ such that for each $i\in\range\tau$,
$F_i\subseteq E_i$ is a maximum cut in $G_i$.
\end{definition}

\begin{theorem}\label{thm:MMaxWBC-hardness}
$\MMaxWBC\restrict2$ is NP-hard already on multistage graphs where each stage is planar.
\end{theorem}

\begin{theorem}
\thmtext{\MMaxWBC}
\end{theorem}
\begin{proof}
Follow the proof of Theorem~\ref{thm:MSPath-algo} but using $\varepsilon\coloneqq -1/(|E|+1)$ and, e.g., \cite{GP81} for the efficient maximum cut computation.
\end{proof}


\section{Multistage Minimum Weight Bipartite Vertex Cover}
In a bipartite graph $G=(V,E)$ with vertex weights~$w\colon V\to\mathbb N_{>0}$,
a vertex set $U\subseteq V$ is a \emph{vertex cover}
if each $e\in E$ is incident with at least one vertex in~$U$.
A vertex cover $U$ has \emph{minimum weight}
if for each vertex cover $U'$ we have $w(U)\leq w(U')$.
\MMinBVC aims to maximize the number of common vertices:

\begin{definition}[\MMinBVC]
Given a multistage graph $(G_i)_{i\in\range\tau}$ with vertex weights $w_i\colon V_i\to \mathbb N_{>0}$ for each $i\in\range\tau$ where each stage is bipartite,
find a $\qualcap$-optimal multistage solution~$(U_i)_{i\in\range\tau}$ such that for each~$i\in\range\tau$,
$U_i\subseteq V_i$ is a minimum weight vertex cover in~$G_i$.
\end{definition}

\begin{theorem}\label{thm:MMinBVC-hardness}
$\MMinBVC\restrict2$ is NP-hard already with uniform weights on multistage graphs where each stage only consists of disjoint cycles.
\end{theorem}

\begin{proof}
We will perform a reduction from the unweighted \textsf{MaxCut} problem on graphs with maximum degree~$3$~\cite{Yan78} to the (unweighted) decision variant of \MMinBVC:
Given a two-stage graph~$\G$ where each stage is bipartite and a number~$\kappa\in\mathbb N$, is there a $(\qualcap,\max)$-optimal multistage solution $(U_1,U_2)$ for \MMinBVC with intersection profit $\qualcap(U_1,U_2)\geq\kappa$?
In \textsf{MaxCut}, one is given an undirected graph~$G=(V,E)$ and a natural number $k$; the question is to decide whether there is an $S\subseteq V$ such that $|\delta(S)|\geq k$.

Given an instance $\mathcal I \coloneqq (G=(V,E),k)$ of \textsf{MaxCut}, we construct an equivalent instance~$\mathcal J \coloneqq (\G, \kappa)$ of \MMinBVC.
Set $\kappa \coloneqq |E| + k$.
We start with an empty 2-stage graph~$\G\coloneqq (G_1,G_2)$.
In~$G_1$, we will have disjoint gadgets for each vertex, allowing three incident edges each.
In~$G_2$, we will have disjoint gadgets for each edge, intersecting with the two corresponding vertex gadgets in~$G_1$ (cf.~\cref{fig:MMinBVC-hardness}).
A minimum vertex cover in~$G_1$ will correspond to a vertex selection in~$G$
and a minimum vertex cover in~$G_2$ will allow us to count the edges of~$G$ that are incident to exactly one selected vertex.

\begin{figure}
	\begin{center}
		\includegraphics[scale=1]{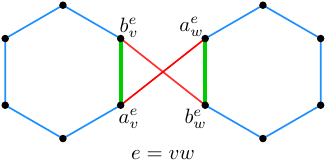}
		\caption{
			Thm.~\ref{thm:MMinBVC-hardness}:
			Bold green edges are in~$E_\cap$, blue edges in $E_1\setminus E_\cap$, red edges in~$E_2\setminus E_\cap$.
		}\label{fig:MMinBVC-hardness}
	\end{center}
\end{figure}

In $G_1$, for each $v\in V$ we create a $6$-cycle~$C_v$ and label its vertices counter-clockwise with $a_v^e$, $b_v^e$, $a_v^{e'}$, $b_v^{e'}$, $a_v^{e''}$, $b_v^{e''}$,
where $\{e,e',e''\} = \delta(v)$ denote the edges incident with $v$
(if $|\delta(v)|<3$, the vertices are labeled with fictional indices).
In $G_2$, we use the same vertex set and for each $e=vw\in E$ we create the $4$-cycle~$C_e$ by introducing the edges $a_v^e a_w^e$, $a_w^e b_w^e$, $b_w^e b_v^e$, $b_v^e a_v^e$.

For a minimum vertex cover~$U_1$ in~$G_1$, for each~$C_v$ either all $a$-vertices or all $b$-vertices are in~$U_1$.
For a minimum vertex cover~$U_2$ in~$G_2$, for each~$C_e$ with $e=vw$, either $a_v^e, b_w^e\in U_1$ or $a_w^e, b_v^e\in U_1$.
We show that~$\mathcal J$ is a yes-instance if and only if~$\mathcal I$ is a yes-instance:
\smallskip

\noindent \enquote{$\Leftarrow$} \tabto{10mm}
Suppose there is an~$S\subseteq V$ such that~$|\delta(S)|\geq k$.
We construct a vertex cover~$U_1$ for~$G_1$ as follows:
for each~$v\in S$, pick all $a$-vertices of~$C_v$ into~$U_1$;
for each~$v\in V\setminus S$, pick all~$b$-vertices of~$C_v$ into~$U_1$.
Similarly, we construct a vertex cover~$U_2$ for~$G_2$:
for each~$e = vw$ with $v\in S$ and $w\not\in S$ (i.e., $e\in\delta(S)$), pick~$a_v^e$ and~$b_w^e$ into~$U_2$;
for each~$e\not\in\delta(S)$, pick two arbitrary opposite vertices in~$C_e$.

Consider some edge $e=vw\in E$.
If $e\in\delta(S)$, $U_1\cap C_e = U_2\cap C_e$ and two vertices are chosen identically.
If $e\not\in\delta(S)$, $U_1\cap C_v$ and $U_1\cap C_w$ contain either both an $a$-vertex or both a $b$-vertex; since~$U_2$ must contain one vertex of each type, exactly one vertex is chosen identically.
Summing over all edge cycles yields~$\qualcap(U_1,U_2) = |E| + |\delta(S)|\geq \kappa$.
\smallskip

\noindent \enquote{$\Rightarrow$} \tabto{10mm}
Let $(U_1,U_2)$ be a $(\qualcap,\max)$-optimal multistage solution with $\qualcap(U_1,U_2) \geq |E| + k$.
We construct a vertex selection~$S\subseteq V$ in the original graph according to the following rule: pick $v$ into $S$ if and only if the $a$-vertices of~$C_v$ are in~$U_1$.

Consider some edge $e=vw\in E$.
If~$v\in S$ and $w\not\in S$ (i.e., $e\in\delta(S)$),
$U_1\cap C_e = \{a_v^e, b_w^e\}$.
Since $U_2$ maximizes the intersection with $U_1$ and can be chosen independently on each~$C_e$,
$U_2$ must be identical to $U_1$ on $C_e$; thus, $|U_1\cap U_2\cap C_e| = 2$.
If~$e\not\in\delta(S)$,
every minimum vertex cover~$U_2$ can contain at most one vertex of $U_1\cap C_e$.
Summing up $U_1\cap U_2$ over all edge gadgets,
we have $\kappa\leq |U_1\cap U_2| = 2\cdot |\delta(S)| + |E\setminus\delta(S)|$ and thus $|\delta(S)|\geq k$.
\end{proof}

\begin{theorem}\label{thm:MMinBVC-algo}
\thmtext{\MMinBVC}
\end{theorem}
\begin{proof}
We only need to show preficiency for~\MMinBVC and apply Theorems~\ref{thm:tim-to-mim} and~\ref{thm:2-stage-apx}.
Let $G=\big(V=(A,B), E\big)$ be a bipartite graph with vertex weights~$w\colon V\to\mathbb N_{>0}$ and $Y\subseteq V$ the set of vertices to be preferred.
Let $\varepsilon\coloneqq 1/(|V|+1)$.
Construct the modified graph $G'$ from~$G$ by adding two new vertices $s,t$ and edge sets $\{sv\mid v\in A\}$ and $\{vt\mid v\in B\}$.
We equip $G'$ with edge weights $w'(uv)\coloneqq w(v)-\varepsilon\cdot \1{v\in Y}$ for $u\in\{s,t\}$, and $w'(uv)\coloneqq\infty$ otherwise.

Note that $w'$ is well-behaved w.r.t.\ $w$.
It is well-known that a minimum weight \st cut~$C\subseteq E$ in~$G'$ (computable in polynomial time~\cite{FF56}) induces a minimum weight vertex cover~$U$ in~$G$ by picking all vertices $v\in V$ that are incident with an edge in~$C$.
Further, $U$ maximizes $|U\cap Y|$:
Suppose there is a minimum weight vertex cover $U'$ in $G$ with $|U'\cap Y| > |U\cap Y|$.
Let $C'$ again denote the \st cut associated with~$U'$.
By construction and since $w(U') = w(U)$,
we have $w'(C') < w'(C)$, again contradicting minimality of $C$ w.r.t.~$w'$.
\end{proof}


\section{Multistage Maximum Weight Bipartite Independent Set}
In a graph $G=(V,E)$ with vertex weights~$w\colon V\to\mathbb N_{>0}$,
a vertex set $U\subseteq V$ is an \emph{independent set}
if for $u,v\in U$ with $u\neq v$ we have $uv\not\in E$.
An independent set $U$ has \emph{maximum weight}
if for each independent set~$U'$ we have $w(U)\geq w(U')$.

\begin{definition}[\MMaxBIS]
Given a multistage graph $(G_i)_{i\in\range\tau}$ with vertex weights $w_i\colon V_i\to \mathbb N_{>0}$ for each $i\in\range\tau$ where each stage is bipartite,
find a $\qualcap$-optimal multistage solution $(U_i)_{i\in\range\tau}$ such that for each~$i\in\range\tau$,
$U_i\subseteq V_i$ is a maximum weight independent set in~$G_i$.
\end{definition}

It is well-known that the complement of a minimum weight vertex cover is a maximum weight independent set.
However, the complement of an optimal multistage vertex cover in general does not yield an optimal multistage independent set.
Nonetheless, the former property is still the key to obtain the following results.

\begin{theorem}\label{thm:MMaxBIS-hardness}
$\MMaxBIS\restrict2$ is NP-hard already with uniform weights on multistage graphs where each stage only consists of disjoint cycles.
\end{theorem}
\begin{proof}
Follow the proof for \cref{thm:MMinBVC-hardness}.
Equivalently to before, every cycle in every stage allows exactly two (sub)solutions.
In particular, these are the very same subsolutions as before, since here maximum independent sets are identical to minimum vertex covers.
To solve \textsc{MaxCut}, one has to pick subsolutions with maximum intersection.
\end{proof}

\begin{theorem}
\thmtext{\MMaxBIS}
\end{theorem}
\begin{proof}
Follow the proof of \cref{thm:MMinBVC-algo}, but using~$\varepsilon\coloneqq -1/(|V|+1)$.
Now, selecting the complement of the vertex cover yields an independent set with the maximum number of vertices from~$Y$.
\end{proof}
\end{document}